\def\maxwidth{
  \ifdim\Gin@nat@width>\linewidth
    \linewidth
  \else
    \Gin@nat@width
  \fi
}
\newtheorem{definitions}{Definition}
\newcommand{\Mp}{\mathcal{P}}
\newcommand{\Bmp}{\overline{\mathcal{P}}}
\newcommand{\tkr}{\text{k-r}}
\newcommand{\appropto}{\mathrel{\vcenter{
  \offinterlineskip\halign{\hfil$##$\cr
    \propto\cr\noalign{\kern2pt}\sim\cr\noalign{\kern-2pt}}}}}
\begin{document}
\title{Effective resistance preserving directed graph symmetrization \thanks{\funding{This work was supported by the Alexander von Humboldt Foundation with funds from the German Federal Ministry of Education and Research (BMBF).} }}
\author{Katherine Fitch \thanks{Katherine Fitch is with the Chair of Operations Research, Technical University of Munich, Munich, Germany. katie.fitch@tum.de} }

\maketitle

\begin{abstract}
This work presents a new method for symmetrization of directed graphs that constructs an undirected graph with equivalent pairwise effective resistances as a given directed graph. Consequently a graph metric, square root of effective resistance, is preserved between the directed graph and its symmetrized version. It is shown that the preservation of this metric allows for interpretation of algebraic and spectral properties of the symmetrized graph in the context of the directed graph, due to the relationship between effective resistance and the Laplacian spectrum. Additionally, Lyapunov theory is used to demonstrate that the Laplacian matrix of a directed graph can be decomposed into the product of a projection matrix, a skew symmetric matrix, and the Laplacian matrix of the symmetrized graph. The application of effective resistance preserving graph symmetrization is discussed in the context of spectral graph partitioning and Kron reduction of directed graphs. 
\end{abstract}
\begin{keywords}
graph symmetrization, directed graph analysis, effective resistance, spectral graph theory, Lyapunov equation, graph partitioning, Kron reduction
\end{keywords}
\begin{AMS}
05C12, 05C50, 05C70, 93D05
\end{AMS}
\section{Introduction}  
In general, a large number of graph-theoretic problems and applications such as clustering \cite{schaeffer2007}, signal processing over graphs \cite{shuman2013}, and edge sparsification \cite{spielman2011, koutis2016} are well-studied and understood in the context of undirected graphs, but less so in the context of directed graphs. The solutions to these problems rely on algebraic and spectral graph theory that has been derived for undirected graphs but cannot be immediately applied to directed graphs due to the loss of symmetry in the Laplacian matrix.  A natural solution approach is to symmetrize the Laplacian and solve the problem on the associated symmetrized graph \cite{Malliaros2013, satuluri2011, wang2010}.

In \cite{Malliaros2013}, the authors provide a comprehensive survey of common symmetrization methods used for clustering directed graphs including bibliometric symmetrization \cite{satuluri2011}, degree-discounted symmetrization \cite{he2014}, or simply ignoring edge direction. These methods have the drawback that there is not always a formal guarantee that the symmetrized Laplacian captures appropriate or desired characteristics, such as distance between nodes, of the original directed Laplacian. Random walk-based symmetrization has been used to define the Cheeger inequality for directed graphs \cite{chung2005},   spectral clustering of directed graphs \cite{zhou2005}, and other applications. However, the definitions in \cite{chung2005} hold only when the graph is strongly connected, or in other words, for every pair of nodes $i$, $j$ there exists a directed path from $i$ to $j$.

One property of undirected graphs that has recently been extended to directed graphs is the notion of effective resistance \cite{young20161, young20162}. In undirected graphs, effective resistance is commonly known as resistance distiance \cite{klein1993} and arises from considering a graph as a network of resistors where each edge is replaced by a resistor with resistance equal to the inverse of the edge weight. The resistance distance between two nodes $i,j$ is simply the net resistance resulting from connecting a voltage source between $i$, $j$. As demonstrated in \cite{klein1993}, resistance distance can be equivalently computed from the pseudoinverse of the undirected graph Laplacian. Resistance distance in undirected graphs has been shown to be proportional to the expected length of a random walk between two nodes, also known as the commute time \cite{Chandra1996}, and is closely related to concepts in Markov chains and random walks  \cite{Tetali1991, Ghosh2008}. A drawback of resistance distance is that it becomes less informative of graph structure when the size of the graph is very large, though there are adaptations to resistance distance that seek to compensate for this deficiency \cite{Luxburg2010}. Effective resistance in directed graphs is a generalization of undirected resistance distance and its square root is a graph metric \cite{young20161, young20162}.

The work presented here builds off of \cite{young20161, young20162} and applies Lyapunov theory to define a new graph symmetrization method by calculating the undirected graph that has equivalent effective resistance to that of a given directed graph. In doing so, it follows immediately that the undirected, symmetrized graph preserves a metric of the directed graph, namely, the square root of effective resistance. Moreover, it is shown that the Laplacian matrix of a directed graph can be decomposed into the product of a projection matrix, a skew symmetric matrix, and the Laplacian matrix of the symmetrized graph. This mapping between a directed Laplacian and its symmetrized version, along with the preservation of a metric on graphs, allows certain tools from algebraic and spectral graph theory to be applied to a symmetrized graph with clear interpretation in the context of the original directed graph. Therefore, the proposed symmetrization method provides an improvement upon heuristic symmetrizations where such an interpretation is not possible. 

The paper is organized as follows. Section \ref{sec:not} provides an overview of relevant notation. In Section \ref{sec:effr}, the generalized notion of effective resistance from \cite{young20161, young20162} is reviewed. Section \ref{sec:sym} introduces effective resistance preserving graph symmetrization and the matrix relationship between a directed Laplacian and the symmetrized Laplacian. Relevant spectral properties of the symmetrized Laplacian are given in Section \ref{sec:spec}. Two relevant applications, graph bisection and node sparsification, are discussed in Sections \ref{sec:bisec} and \ref{sec:spars}, respectively. Section \ref{sec:large} briefly discusses analysis of large graphs. The paper is concluded with final remarks in Section~\ref{sec:fr}.

\section{Notation} \label{sec:not}
Let $\mathcal{G} = (\mathcal{V}, \mathcal{E}, A)$ be a connected, directed graph, where $\mathcal{V} = \{1,2,\dots, n\}$ is the set of vertices, and $\mathcal{E} \subseteq \mathcal{V} \times \mathcal{V}$ is the set of $m$ edges.  $A \in \mathbb{R}^{n \times n} $ is the adjacency matrix where element $a_{i,j}$ is the nonnegative weight on edge $(i,j)$. If $(i,j) \in \mathcal{E}$, then $a_{i,j} > 0 $; otherwise $a_{i,j} = 0$.  Graphs with self loops, i.e., with an edge from a node to itself, are not considered. The \emph{out-degree} of node $i$ is calculated as $d_i = \sum_{j=1}^n a_{i,j}$.  The out-degree matrix is a diagonal matrix of node out-degrees, $D = \text{diag}\{d_1, d_2, \dots,d_n\}$. The associated directed \emph{Laplacian} matrix is defined as $L=D-A$.  In constructing the directed Laplacian matrix the precedent is followed that out-degrees are on the main diagonal and row sums are equal to zero, that is, $L\mathbf{1}_n = 0$. Let the vector of column sums of $L$ be $\mathbf{1}_n^TL = \mathbf{\Delta}_d$.  The graphs considered are \emph{connected} in the sense that there exists at least one globally reachable node, $k$. In other words, there is a path from every node $i$ in $\mathcal{G}$ to $k$. As noted in \cite{young20161}, this notion of connectivity lies in between the classical definitions of strong and weak connectivity. The Moore--Penrose pseudoinverse of $L$ is $L^+$. 

A graph is said to be undirected if $A$ is symmetric, that is, $(i,j) \in \mathcal{E}$ implies $(j,i) \in \mathcal{E}$ and $a_{i,j} = a_{j,i}$. If an undirected graph has been calculated by the symmetrization method described in this paper, then edge weights $a_{i,j}$ are permitted to be negative. The Laplacian matrix of an undirected graph is also symmetric. An undirected graph is connected if and only if there is a path between any pair of nodes $(i,j)$. Sets and matrices corresponding to undirected graphs are indexed with $u$; for example, $L_u$ is the undirected Laplacian associated with undirected graph $\mathcal{G}_u$. 

The $i$th eigenvalue of a matrix $Z$ is denoted by $\lambda_i(Z)$, where the eigenvalues are ordered such that $\lambda_1(Z)\leq \dots \leq \lambda_i(Z) \leq \dots \leq \lambda_n(Z). $ The eigenvalues of a directed Laplacian matrix either are 0 or have positive real part \cite{Agaev2005}. The eigenvalues of an undirected Laplacian matrix are all real and nonnegative. If a graph (directed or undirected) is connected as defined above, then the associated Laplacian matrix will have one eigenvalue of 0.
 
Let $\mathbf{e}_n^{(k)}$ be the $k$th standard basis vector for $\mathbb{R}^n$. Let $P_n = I_n - \frac{1}{n}\mathbf{1}_n\mathbf{1}_n^T$. Let $\mathbf{1}^{\perp}_n =$span$\{\mathbf{1}_n\}^{\perp}$ be the subspace of $\mathbb{R}^n$ perpendicular to $\mathbf{1}_n$. Let $Q \in \mathbb{R}^{(n - 1) \times n}$ be a matrix with rows that form an orthonormal basis for $\mathbf{1}^{\perp}_n $. Then the following properties hold:
\begin{align}
Q\mathbf{1}_n = \mathbf{0}, \;\; QQ^T = I_{n-1}, \;\; Q^TQ=P_n. \label{Q}
\end{align}

The \emph{reduced Laplacian} matrix is defined as $\bar{L} = QLQ^T$ and characterizes the Laplacian matrix on $\mathbf{1}^{\perp}_n $. $\bar{L}$ has the same eigenvalues as $L$ except for a 0 eigenvalue and is therefore invertible if the graph is connected \cite{young2010}. 

Let $H \in \mathbb{R}^{n \times n}$ be a projection matrix onto $\mathbf{1}^{\perp}_n $, where $H$ is not necessarily an orthogonal projection matrix and $HL = L$, that is, the image of $L$ is contained in the kernel of $(H-I_n)$. 

 For a subset $\mathcal{P} \subset \mathcal{V}$, let $\overline{\mathcal{P}} \subset \mathcal{V}$ be the complementary subset. A directed cut between two subsets $(\mathcal{P},\overline{\mathcal{P}})$ is defined as the sum of weights of edges directed from ${\mathcal{P}}$ toward $\overline{\mathcal{P}}$. Equivalently,
 \begin{align}
 \text{cut}(\mathcal{P},\overline{\mathcal{P}}) = \frac{1}{2} y^T (L+L^T) y, \nonumber
 \end{align}
 where
 \begin{align}
 y_i = \begin{cases}
 1 & \text{if } i \in \mathcal{P}, \\
  0 & \text{if }  i \in \overline{\mathcal{P}}. \label{yvec}
\end{cases} 
\end{align}
For undirected graphs, $\text{cut}(\mathcal{P},\overline{\mathcal{P}})  = \text{cut}(\overline{\mathcal{P}},{\mathcal{P}}) $.

\section{Effective resistance} \label{sec:effr}
The notion of effective resistance generalizes resistance distance in undirected graphs such that directed graphs can also be considered. This section provides a brief review of effective resistance as defined in the two-part paper \cite{young20161, young20162}. 
\begin{definitions} \label{defres} \cite{young20161}: Let $\mathcal{G}$ be a connected, directed graph with $n$ nodes and Laplacian matrix $L$. Then the effective resistance between nodes $i$ and $j$ in $\mathcal{G}$ is defined as
\begin{align}
r_{i,j} &= \Big( \mathbf{e}_n^{(i)} - \mathbf{e}_n^{(j)} \Big)^T X \Big( \mathbf{e}_n^{(i)} - \mathbf{e}_n^{(j)} \Big) \nonumber \\
& = x_{i,i} + x_{j,j} - 2x_{i,j}, \label{res}
\end{align}
where
\begin{align}
X & = 2Q^T \Sigma Q, \label{getX}\\
\bar{L} \Sigma + \Sigma \bar{L}^T &= I_{n-1}, \label{lyapeff} \\
\bar{L} & = Q L Q^T, \label{lbar}
\end{align}
and $Q$ is a matrix satisfying (\ref{Q}).
\end{definitions}
The solution, $\Sigma$, to (\ref{lyapeff}) has a unique, symmetric, positive definite solution when all eigenvalues of $\bar{L}$ have positive real part \cite{Dullerud2000}. Therefore, $\Sigma$ is invertible whenever the graph associated with $L$ is connected. For an undirected graph with Laplacian $L_u$ and reduced Laplacian $\bar{L}_u = Q L_u Q^T$, it can be observed that the solution to (\ref{lyapeff}) is $\Sigma = \frac{1}{2} \bar{L}_u  ^{-1}$. Thus, $X = L_u^+$ and $r_{u_{i,j}}= ( \mathbf{e}_n^{(i)} - \mathbf{e}_n^{(j)} )^T L_u^+ ( \mathbf{e}_n^{(i)} - \mathbf{e}_n^{(j)} ) $. Therefore, Definition \ref{defres} is consistent with the classical notion of resistance distance in undirected graphs by Klein and Rand\'{i}c \cite{klein1993}.

The following three properties of effective resistance were proven in \cite{young20161}:
\begin{enumerate}
\item Effective resistance is well-defined.
\item Effective resistance depends on connections between nodes.
\item Effective resistance is a distance-like function and its
square root is a metric.
\end{enumerate}

\section{Undirected and directed Laplacians with equivalent effective resistance} \label{sec:sym}
 The following section demonstrates that for any connected, directed graph with Laplacian matrix $L$, there exists a symmetric, undirected Laplacian $\hat{L}_u$ on the same set of nodes and possibly admitting negative edge weights, for which the effective resistance between any pair of nodes $(i,j)$ in $L$ is equal to the effective resistance between $(i,j)$ in $\hat{L}_u$.
 
\begin{proposition}\label{thm1}
Let $\mathcal{G} = (\mathcal{V},\mathcal{E},A)$ be a directed, connected graph of order $n$. Then, there exists an undirected graph $\mathcal{\hat{G}}_u= (\mathcal{V},\mathcal{\hat{E}}_u,\hat{A}_u)$,  possibly admitting negative edge weights, for which effective resistance in $\mathcal{G}$ is equivalent to effective resistance in $\mathcal{\hat{G}}_u$ for all $i,j \in \mathcal{V}$.
\end{proposition}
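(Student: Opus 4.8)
The plan is to exhibit the undirected Laplacian explicitly as the Moore--Penrose pseudoinverse of the matrix $X$ appearing in Definition~\ref{defres}, and then to verify that this pseudoinverse is a legitimate (signed) undirected Laplacian whose effective resistances reproduce those of $\mathcal{G}$. The starting observation is that the effective resistance data of $\mathcal{G}$ is entirely encoded in the single matrix $X = 2Q^T\Sigma Q$, since $r_{i,j} = (\mathbf{e}_n^{(i)} - \mathbf{e}_n^{(j)})^T X (\mathbf{e}_n^{(i)} - \mathbf{e}_n^{(j)})$. Hence it suffices to construct an undirected graph whose own effective-resistance matrix equals this same $X$. Because the excerpt already records that for an undirected Laplacian $L_u$ the effective-resistance matrix is exactly $L_u^+$, the natural candidate is $\hat{L}_u := X^+$.

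First I would establish the structural properties of $X$. Since $\Sigma$ is symmetric positive definite, being the unique solution of the Lyapunov equation \refe{lyapeff} for a connected graph, $X = 2Q^T\Sigma Q$ is symmetric and positive semidefinite. Using $Q\mathbf{1}_n = \mathbf{0}$ from \refe{Q} gives $X\mathbf{1}_n = \mathbf{0}$, and conversely, restricting to $\mathbf{1}^{\perp}_n$ one checks (writing vectors in $Q$-coordinates via $QQ^T = I_{n-1}$) that $X$ acts there as the positive definite operator $2\Sigma$; therefore $\ker X = \operatorname{span}\{\mathbf{1}_n\}$ and $\operatorname{rank} X = n-1$.

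Next I would define $\hat{L}_u := X^+$ and check that it is a valid undirected Laplacian admitting negative weights. The pseudoinverse of a symmetric positive semidefinite matrix is again symmetric positive semidefinite with the same kernel, so $\hat{L}_u$ is symmetric, $\hat{L}_u\mathbf{1}_n = \mathbf{0}$ (equivalently its row and column sums vanish), and $\operatorname{rank}\hat{L}_u = n-1$. Any symmetric matrix with zero row sums is the Laplacian of an undirected graph once signed weights are permitted: set $\hat{a}_{u_{i,j}} = -(\hat{L}_u)_{i,j}$ for $i\neq j$, read off $\hat{\mathcal{E}}_u$ as the support of the off-diagonal entries, and note that the diagonal entries are then forced to be the signed weighted degrees. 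This defines $\mathcal{\hat{G}}_u = (\mathcal{V}, \mathcal{\hat{E}}_u, \hat{A}_u)$. To invoke the undirected effective-resistance formula one also needs $\mathcal{\hat{G}}_u$ to be connected, i.e.\ $Q\hat{L}_uQ^T$ invertible; a short computation in $Q$-coordinates shows $Q\hat{L}_uQ^T = \tfrac{1}{2}\Sigma^{-1}$, which is positive definite, so connectivity holds.

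Finally, I would close the loop on effective resistance. Applying the undirected identity (the effective-resistance matrix of $\mathcal{\hat{G}}_u$ equals $\hat{L}_u^+$) together with the involution property $(X^+)^+ = X$ of the Moore--Penrose pseudoinverse gives $\hat{L}_u^+ = (X^+)^+ = X$, so the effective-resistance matrix of $\mathcal{\hat{G}}_u$ is precisely $X$. Consequently the effective resistance between $i$ and $j$ in $\mathcal{\hat{G}}_u$ is $(\mathbf{e}_n^{(i)} - \mathbf{e}_n^{(j)})^T X (\mathbf{e}_n^{(i)} - \mathbf{e}_n^{(j)}) = r_{i,j}$ for every pair $(i,j)$, as required. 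I expect the main obstacle to be the verification step rather than the construction: one must confirm that $\hat{L}_u = X^+$ genuinely carries the zero-row-sum structure of a Laplacian, and it is precisely here that permitting negative edge weights is indispensable, since a positive semidefinite matrix annihilating $\mathbf{1}_n$ need not have nonpositive off-diagonal entries and so need not arise from a nonnegatively weighted graph.
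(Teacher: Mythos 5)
Your proposal is correct and takes essentially the same route as the paper: both construct the symmetrized Laplacian as $\hat{L}_u = X^+ = \tfrac{1}{2}Q^T\Sigma^{-1}Q$ and observe that this matrix is symmetric, positive semidefinite, with zero row and column sums, hence a valid undirected Laplacian once negative weights are allowed. The only cosmetic difference is in the final verification---the paper notes that $\Sigma$ also solves the trivial Lyapunov equation $\tfrac{1}{2}\Sigma^{-1}\Sigma + \tfrac{1}{2}\Sigma\Sigma^{-1} = I_{n-1}$, so $L$ and $\hat{L}_u$ share the same Lyapunov solution, whereas you invoke the undirected identity $X = L_u^+$ together with $(X^+)^+ = X$; these amount to the same computation, and your added checks on the kernel, rank, and connectivity of $\hat{L}_u$ merely make explicit what the paper leaves implicit.
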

\begin{proof}
It can be observed that the unique solution to (\ref{lyapeff}), $\Sigma$, also satisfies the following trivial Lyapunov equation with its inverse:
\begin{align}
\frac{1}{2}\Sigma^{-1}\Sigma + \frac{1}{2}\Sigma \Sigma^{-1} = I_{n-1}.  \label{siglyap}
\end{align}
Additionally, the pseudoinverse of $X$ can be expressed as $X^+ = (2Q^T \Sigma Q)^+ = \frac{1}{2} Q^T \Sigma^{-1} Q$. The matrix $X^+$, therefore, is an $n \times n$ symmetric, positive semidefinite matrix with zero row and column sums. Therefore, $X^+$ can be interpreted as an undirected Laplacian matrix, $\hat{L}_u = X^+$, where $\hat{L}_u$ potentially admits negative edge weights and is associated with an undirected graph, $\mathcal{\hat{G}}_u= (\mathcal{V},\mathcal{\hat{E}}_u,\hat{A}_u)$, on the same set of nodes as $\mathcal{G}$. Since $L$, the Laplacian associated with $\mathcal{G}$,  and $\hat{L}_u$ have the same solution to the Lyapunov equation, (\ref{lyapeff}), it follows immediately that effective resistance $\mathcal{G}$ is equivalent to effective resistance in $\mathcal{\hat{G}}_u$ for all $i,j \in \mathcal{V}$. 
\end{proof}
It follows from Proposition \ref{thm1} calculating an undirected Laplacian $\hat{L}_u$ with equivalent effective resistance to a directed Laplacian, $L$ can be accomplished by the following steps:
\begin{enumerate}
\item Calculate the reduced Laplacian, $\bar{L} = QLQ^T$.
\item Solve the Lyapunov equation $\bar{L} \Sigma + \Sigma \bar{L}^T = I_{n-1}$.
\item Project from $\mathbb{R}^{(n-1)\times (n-1)}$ to $\mathbb{R}^{n \times n}$ by $X  = 2Q^T \Sigma Q$.
\item Calculate the pseudoinverse $\hat{L}_u = X^+$.
\end{enumerate}
The resulting undirected Laplacian can be thought of as a symmetrization of a directed Laplacian where a metric on graphs, square root of effective resistance, has been preserved. This is in contrast with heuristic symmetrization methods where there is no guarantee that the symmetrized Laplacian provides a meaningful representation of the original directed Laplacian. 
To illustrate, Figure \ref{fig1} shows four simple directed graphs and the corresponding undirected graphs with equivalent effective resistance.

\begin{figure}[h!]
\centering
\includegraphics[width=3.4in]{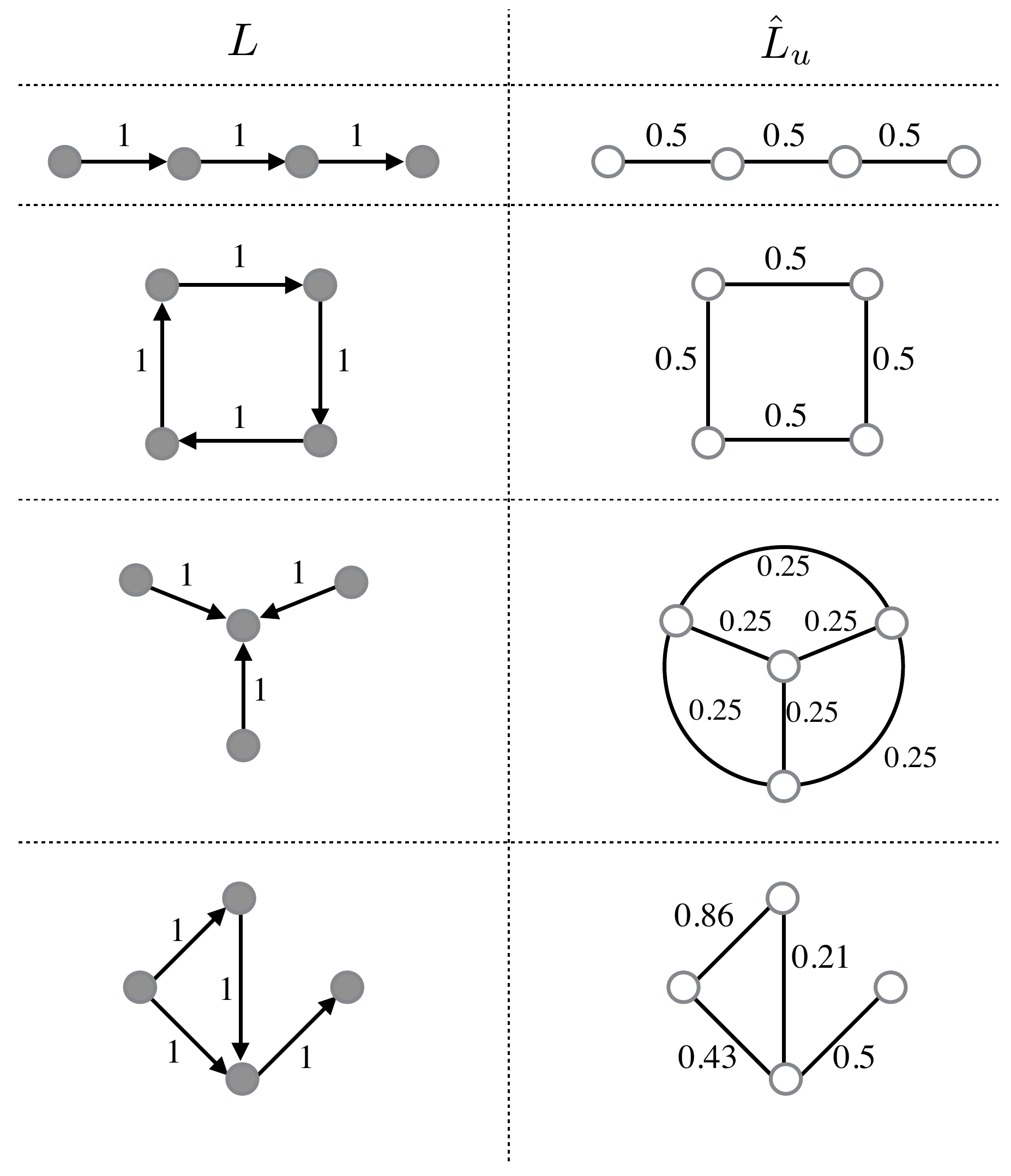}
\caption{Simple four-node directed graphs and the corresponding undirected graphs with equivalent effective resistances between all pairs of nodes. }
\label{fig1}
\end{figure}

In statistics, the inverse of an observed covariance matrix is known as a \emph{precision matrix}, which can be used to define an undirected Laplacian (often with a sparsification step)  \cite{Lauritzen1996}. The work here follows similar logic by connecting the steady-state covariance matrix from an directed graph to the undirected graph that generates the same steady-state covariance matrix. 
From a dynamics systems perspective this means that if each node in $L$ and $\hat{L}_u$ were subject to independent white noise of constant covariance, the expected steady-state covariance between any pair of nodes in $L$ is equivalent to the expected steady-state covariance between that pair of nodes in $\hat{L}_u$.  

\emph{Remark} 1. When studying graphs that admit negative edge weights, it is common to employ  the signed Laplacian. The signed Laplacian is defined similarly to the combinatorial Laplacian, with the adjustment that the degree of each node is the sum of the absolute values of all adjacent edges \cite{kunegis2010}. The primary motivations for this definition are to ensure that the signed Laplacian is at least positive semidefinite and that there are no zero entries along the main diagonal.  Drawbacks to working with the signed Laplacian are that row and column sums are no longer guaranteed to be zero, and interpretation of spectral properties is less straightforward. A Laplacian as defined by $\hat{L}_u=X^+$ is guaranteed to be positive semidefinite despite the presence of negative edge weights because $\Sigma$ is positive definite and the multiplications with $Q^T$ and $Q$ add an eigenvalue at zero.  Due to the guaranteed positive semidefiniteness of $\hat{L}_u$ the signed Laplacian is not used  in this work. 

\subsection{Exploring the relationship between $L$ and $\hat{L}_u$} \label{sec:explore}
The previous section demonstrates how one can use the inverse of the solution to the Lyapunov equation to derive an undirected graph with equivalent resistance distances. It is noted that though two nonisomorphic undirected graphs can have the same multiset of resistance distances, the resistance distance matrix $R$ with elements $r_{i,j}$ uniquely determines an undirected graph up to an isomorphism. This is a result of the one-to-one relationship between elements of $R$ and $L_u^+$ \cite{young20162} in combination with the uniqueness property of Moore--Penrose pseudoinverses. Subsequently, each directed graph $L$ has the same effective resistances as exactly one undirected graph $\hat{L}_u$. However, the reverse is not true and one undirected graph $\hat{L}_u$ can have equivalent effective resistances to many directed graphs. The following illustrates this property and describes the algebraic relationship required between $L$ and $\hat{L}_u$ such that the two graphs have equivalent resistance distances between nodes. 

It is shown above that $L$ and $\hat{L}_u$ have the same solution to the Lyapunov equation on $\mathbf{1}^{\perp}_n $. From a linear algebra perspective, in order for equations (\ref{lyapeff}) and (\ref{siglyap}) to hold, the following relation must be true (see \cite{Barnett1967}):
\begin{align}
\bar{L} &= \Big(\bar{K} + \frac{1}{2} I_{n-1}\Big) \Sigma^{-1}, \nonumber \\
\bar{L} &= \Big(\bar{K} + \frac{1}{2} I_{n-1}\Big) 2Q\hat{L}_uQ^T, \nonumber
\end{align}
where $\bar{K}$ is a skew symmetric $(n-1) \times (n-1)$ matrix satisfying
\begin{align}
\bar{L}\bar{K} + \bar{K}\bar{L}^T = \frac{1}{2}(\bar{L}-\bar{L}^T) \nonumber.
\end{align}
Pre- and postmultiplying by $Q^T$ and $Q$, respectively, and recalling that $Q^TQ = P_n$ and $P_n\hat{L}_u = \hat{L}_uP_n = \hat{L}_u$, yields
\begin{align}
Q^T \bar{L} Q = 2K\hat{L}_u + \hat{L}_u, \nonumber  
\end{align}
where $K = Q^T \bar{K} Q$. It follows from the construction that $K$ is a skew symmetric matrix with one eigenvalue at zero, and row and column sums are equal to the zero vector.  Furthermore, note that $Q^T \bar{L} Q = Q^TQLQ^TQ = P_nLP_n = P_nL$. Thus,
\begin{align}
P_nL = 2K\hat{L}_u + \hat{L}_u. \nonumber  
\end{align}
The matrix $P_n$ is an orthogonal projection matrix onto $\mathbf{1}^{\perp}_n $. Therefore, $P_nL$ has the property that the vector $\mathbf{1}_N$ is in the left and right null spaces. In other words, the row and column sums of $P_nL$ are equal to the zero vector. Premultiplying both sides by $H= L(P_nL)^+$ yields
\begin{align}
L = H(I_n + 2K )\hat{L}_u, \label{decomp}
\end{align}
where the properties from the definition of $H$ that $HP_n = H$ and $HL = L$ have been applied. The intuition behind the matrix $H$ is that it determines a node, or a set of nodes, that is globally reachable, as $\text{tr}(H) = n-1$ and the diagonal entries of $H$ that satisfy $H_{i,i}  <1$ give the set of nodes which can be set as a root for a spanning tree of $\mathcal{G}$. For example, if there is only one globally reachable node, $k$, and without loss of generality let $k$ be indexed as the $n$th node, then $H$ is of the form
\begin{align}
H = \left[ \begin{array}{ccc|c}
 \ddots & & & -1\\
 & I_{(n-1)\times (n-1)} & & \vdots\\
 & & \ddots & -1\\
\hline
0 & \cdots & 0 &0
\end{array} \right].
\end{align}
If the graph is an unweighted, directed cycle then $H=P$, that is, $H$ is the orthogonal projection matrix onto $\mathbf{1}^{\perp}_n $.

\section{Spectral Properties} \label{sec:spec}
As demonstrated in Section \ref{sec:explore}, one can always decompose a directed, connected Laplacian as $L = H(I_{N}+2K)\hat{L}_u$, where $\hat{L}_u$ is an undirected, connected Laplacian matrix defined on the same set of nodes where effective resistance between any two nodes, $i,j$ in $\hat{L}_u$ is equivalent to effective resistance between $i,j$ in $L$.  This section presents additional properties of $\hat{L}_u$ and demonstrates that the spectra of $\hat{L}_u$ characterizes features of $L$, thus further motivating the use of $\hat{L}_u$ as a meaningful symmetrization of $L$.

\subsection{Trace preservation, average cuts, and eigenvalue bounds}
 The following subsection introduces preserved quantities between $L$ and $\hat{L}_u$ and demonstrates that the eigenspectrum of $L$ is bounded by the eigenspectrum of $\hat{L}_u$.  
\begin{proposition}\label{propcuts}
Let $\mathcal{G} = (\mathcal{V},\mathcal{E},A)$ be a directed, connected graph of order $n$. Let  $\mathcal{\hat{G}}_u= (\mathcal{V},\mathcal{\hat{E}}_u,\hat{A}_u)$ be the associated undirected graph with equivalent effective resistance to $\mathcal{G}$. Then the mean value of all  graph cuts in $\mathcal{G}$ is equivalent to the mean value of all  cuts in  $\mathcal{\hat{G}}_u$.
\end{proposition}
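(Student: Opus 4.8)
The plan is to read ``mean value of all graph cuts'' as the average of $\text{cut}(\mathcal{P},\overline{\mathcal{P}})$ over all $2^n$ subsets $\mathcal{P}\subseteq\mathcal{V}$, which is the same as the expectation $\mathbb{E}\big[\text{cut}(\mathcal{P},\overline{\mathcal{P}})\big]$ taken when the indicator vector $y$ from (\ref{yvec}) has independent Bernoulli$(1/2)$ entries. Under this reading the proposition reduces to comparing two averaged quadratic forms, and I expect the whole statement to collapse onto a single scalar identity, namely the equality of traces $\text{tr}(L)=\text{tr}(\hat{L}_u)$ (the ``trace preservation'' advertised in the subsection title).

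First I would compute $\mathbb{E}[y^T M y]$ for a general square matrix $M$. Using $\mathbb{E}[y_i^2]=\mathbb{E}[y_i]=\tfrac12$ and $\mathbb{E}[y_iy_j]=\tfrac14$ for $i\neq j$, and writing $\sum_{i\neq j}M_{ij}=\mathbf{1}_n^T M\mathbf{1}_n-\text{tr}(M)$, one obtains
\begin{align}
\mathbb{E}[y^T M y] = \tfrac14\,\text{tr}(M) + \tfrac14\,\mathbf{1}_n^T M \mathbf{1}_n. \nonumber
\end{align}
For the directed graph, since $y^T L y = \tfrac12 y^T(L+L^T)y = \text{cut}(\mathcal{P},\overline{\mathcal{P}})$, I take $M=L$; the term $\mathbf{1}_n^T L\mathbf{1}_n$ vanishes because $L\mathbf{1}_n=0$, leaving mean cut $=\tfrac14\text{tr}(L)$. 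For the undirected graph $\text{cut}(\mathcal{P},\overline{\mathcal{P}})=y^T\hat{L}_u y$, and since $\hat{L}_u$ has zero row and column sums the same formula gives mean cut $=\tfrac14\text{tr}(\hat{L}_u)$. Thus the proposition is equivalent to $\text{tr}(L)=\text{tr}(\hat{L}_u)$.

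The main work, and the step I expect to be the crux, is establishing this trace identity, which I would do in three moves. First, from $\hat{L}_u = X^+ = \tfrac12 Q^T\Sigma^{-1}Q$ together with the cyclic property of the trace and $QQ^T=I_{n-1}$, I get $\text{tr}(\hat{L}_u)=\tfrac12\text{tr}(\Sigma^{-1})$. Second, from $\bar{L}=QLQ^T$, the identity $Q^TQ=P_n$, and $\mathbf{1}_n^T L\mathbf{1}_n=0$, I get $\text{tr}(\bar{L})=\text{tr}(LP_n)=\text{tr}(L)$. Third, I postmultiply the Lyapunov equation (\ref{lyapeff}) by $\Sigma^{-1}$ to obtain $\bar{L}+\Sigma\bar{L}^T\Sigma^{-1}=\Sigma^{-1}$ and take traces; the cyclic property gives $\text{tr}(\Sigma\bar{L}^T\Sigma^{-1})=\text{tr}(\bar{L}^T)=\text{tr}(\bar{L})$, so $2\,\text{tr}(\bar{L})=\text{tr}(\Sigma^{-1})$. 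Chaining these yields $\text{tr}(\hat{L}_u)=\tfrac12\text{tr}(\Sigma^{-1})=\text{tr}(\bar{L})=\text{tr}(L)$, which finishes the proof.

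The only subtlety to watch is that $L+L^T$ is not itself a Laplacian, since its row sums $d_i-d_i^{\text{in}}$ need not vanish; what rescues the computation is that only the aggregate $\mathbf{1}_n^T(L+L^T)\mathbf{1}_n$ enters the averaged quadratic form, and this is zero because $L\mathbf{1}_n=0$. Because both averages are normalized in the same way, the precise convention for enumerating subsets (whether one includes the empty and full sets, or counts ordered versus unordered partitions) is immaterial, so the equality of means follows entirely from the single trace identity.
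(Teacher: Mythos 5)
Your proposal is correct, and it pivots on the same reduction the paper uses --- everything collapses to $\text{tr}(L)=\text{tr}(\hat{L}_u)$ --- but both steps are executed by genuinely different means. For the averaging, the paper stacks all $k=2^n-2$ indicator vectors into a matrix $Y$, writes the mean cuts as $\frac{1}{k}\text{tr}(Y(L+L^T)Y^T)$ and $\frac{1}{k/2}\text{tr}(Y\hat{L}_uY^T)$, and evaluates $Y^TY$ explicitly; your Bernoulli$(1/2)$ expectation $\mathbb{E}[y^TMy]=\frac14\text{tr}(M)+\frac14\mathbf{1}_n^TM\mathbf{1}_n$ reaches the same conclusion with less combinatorial bookkeeping, and your closing observation --- that including the empty/full sets or switching between ordered and unordered bipartitions rescales both means by the same factor --- is exactly the right justification for why the differing normalizations are harmless. (Incidentally, the cancellation you highlight, $\mathbf{1}_n^T(L+L^T)\mathbf{1}_n=0$, is the same one the paper relies on to discard the $\mathbf{1}_n\mathbf{1}_n^T$ part of $Y^TY$.) For the trace identity itself, the paper invokes the decomposition $\bar{L}=(\bar{K}+\frac12 I_{n-1})\Sigma^{-1}$ from Section 4.1, so that $\text{tr}(\bar{L})=\frac12\text{tr}(\Sigma^{-1})$ follows from the fact that a skew-symmetric matrix times a symmetric one has zero trace; you instead postmultiply the Lyapunov equation (\ref{lyapeff}) by $\Sigma^{-1}$ and use cyclicity of the trace to get $2\,\text{tr}(\bar{L})=\text{tr}(\Sigma^{-1})$. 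Your derivation is more elementary and self-contained, since it needs only the defining Lyapunov equation and not the existence of $\bar{K}$ (which the paper imports from Barnett's result), while the paper's route buys consistency with the structural decomposition $L=H(I_n+2K)\hat{L}_u$ that it develops and reuses throughout the rest of the paper.
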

\begin{proof}
It is clear from the decomposition $\bar{L} = (\bar{K} + \frac{1}{2} I_{n-1})\Sigma^{-1}$ that $\text{tr}(\bar{L}) = \frac{1}{2}\text{tr}(\Sigma^{-1})$. Applying (\ref{lbar}) then yields $\text{tr}({L}) = \text{tr}(\hat{L}_u)$. Since the trace of $L$ ($\hat{L}_u$) is equal to the negative sum of off-diagonal elements of $L$ ($\hat{L}_u$, respectively) it holds that the total sum of edge weights in $L$ is equal to the total sum of edge weights in $\hat{L}_u$.
Letting $k = 2^{n} - 2$ be the total number of possible bipartitions of the graph, the mean value of all graph cuts in $L$ and $\hat{L}_u$, respectively, can be written as
\begin{align} 
\bar{C}_L = \frac{1}{k} \text{tr}(Y (L+L^T) Y^T), \; \; \; \; \; \bar{C}_{\hat{L}_u} = \frac{1}{\frac{1}{2}k} \text{tr}(Y \hat{L}_u Y^T), \nonumber
\end{align}
where $Y$ is a $k \times n$ matrix where each row is a vector, $y$, corresponding to a partition as in (\ref{yvec}). Applying the property  that the trace of the product of three matrices is invariant under cyclic permutations gives
\begin{align}
\bar{C}_L &= \frac{1}{k} \text{tr}( (L+L^T) Y^TY)  \nonumber \\
&= \frac{1}{k} \text{tr}( (L+L^T) (2^{n-1}I + 2^{n-2} \mathbf{1}_n \mathbf{1}_n^T)) \nonumber \\
&= \frac{1}{k} \text{tr} (L+L^T) + \frac{1}{2k} \text{tr} (2^{n-2} \Delta_d^T \mathbf{1}_n^T) =  \frac{1}{k} \text{tr} (L+L^T) \nonumber \\
& = \frac{2}{k}\text{tr}(\hat{L}_u) = \frac{1}{\frac{1}{2}k} \text{tr}(Y \hat{L}_u Y^T) = \bar{C}_{\hat{L}_u}.
\end{align}
\end{proof}

In addition to the property that the sum of eigenvalues of $L$ is equal to the sum of eigenvalues of $\hat{L}_u$, it is shown in \cite{vogt1966} that the eigenvalues of $\Sigma^{-1}$ bound those of $\bar{L}$. Therefore,
\begin{align}
\lambda_2(\hat{L}_u) \leq \lambda_2(L), \;\;\;  \lambda_n(L) \leq \lambda_n(\hat{L}_u). \nonumber 
\end{align}

\subsection{Relationship between effective resistance and eigenvectors of $\hat{L}_u$} \label{sec:erev}
The eigenvectors of $\hat{L}_u$ are closely related to effective resistance. More specifically, the space of the eigenvalue-scaled eigenvectors of $\hat{L}_u$ is a Euclidean space that preserves effective resistance in $\hat{L}_u$ \cite{saerens2004} and therefore also in $L$. This can be observed by the following equation:
\begin{align}
r_{i,j} &= \Big(\mathbf{e}_n^{(i)} - \mathbf{e}_n^{(j)} \Big)^T X \Big(\mathbf{e}_n^{(i)} - \mathbf{e}_n^{(j)} \Big) \nonumber \\
&= \Big(\mathbf{e}_n^{(i)} - \mathbf{e}_n^{(j)} \Big)^T \hat{L}_u^+ \Big(\mathbf{e}_n^{(i)} - \mathbf{e}_n^{(j)} \Big) \nonumber \\
&= \Big(\mathbf{x}_i - \mathbf{x}_j \Big)^T U^T \hat{L}_u^+ U \Big(\mathbf{x}_i - \mathbf{x}_j \Big) \nonumber \\ 
&= \Big(\mathbf{x}_i - \mathbf{x}_j \Big)^T (\Lambda^{1/2})^T \Lambda^{1/2} \Big(\mathbf{x}_i - \mathbf{x}_j\Big) \nonumber \\
&= \Big(\mathbf{y}_i - \mathbf{y}_j \Big)^T \Big(\mathbf{y}_i - \mathbf{y}_j \Big), \label{res_decomp}
\end{align}
where $\mathbf{x}_i = U^T \mathbf{e}_i$ and ${\mathbf{y}}_i = \Lambda^{1/2}\mathbf{x}_i$. Thus the square root of effective resistance between two nodes is equivalent to the Euclidean distance between the corresponding elements in the transformed space.   Furthermore, as shown in \cite{fouss2005}, the space of eigenvectors corresponding to the $l$ largest eigenvalues of $\hat{L}_u^+$ ($l$ smallest nonzero eigenvalues of $\hat{L}_u$) approximately preserves effective resistance. Let $\mathbf{u}_k$ denote the eigenvector corresponding to the $k$th largest  eigenvector of $\hat{L}_u^+$. Let $\tilde{U} =[\mathbf{0}, \hdots , \mathbf{0}, \mathbf{u}_l, \mathbf{u}_{l+1}, \hdots, \mathbf{u}_n]$,  $\tilde{\Lambda} = \text{diag}[0, \hdots, 0, \lambda_l, \lambda_{l+1},\hdots,\lambda_{n}]$.  Then an approximation of effective resistance can be calculated as
\begin{align}
\tilde{r}_{i,j}= \Big(\tilde{\mathbf{y}}_i - \tilde{\mathbf{y}}_j \Big)^T \Big(\tilde{\mathbf{y}}_i - \tilde{\mathbf{y}}_j \Big),
\end{align}
where $\tilde{\mathbf{x}}_i =\tilde{U}\mathbf{e}_i$ and $\tilde{\mathbf{y}}_i = \tilde{\Lambda} \tilde{\mathbf{x}}_i.$ This approximation is bounded by the sum of the $l-1$ smallest eigenvalues of $\hat{L}_u^+$ \cite{fouss2005}:  
\begin{align}
\|r_{i,j} - \tilde{r}_{i,j} \| \leq \sum_{k = 1}^{l-1} \lambda_k(\hat{L}_u^+). \label{r_approx}
\end{align} 
The relationship between effective resistance and the eigenspectrum of $\hat{L}_u$ implies that the scaled eigenvectors of $\hat{L}_u$ contain information about a graph metric on $L$. This further motivates the use of $\hat{L}_u$ as a proxy for $L$ for spectral graph analysis applications.

\section{Application: Graph Bisection} \label{sec:bisec}
In this section, graph bisection is investigated as an application for effective resistance preserving graph symmetrization. First, undirected graph bisection and the Fiedler vector corresponding to the undirected graph Laplacian are reviewed.  Then, directed graph bisection based on the Fiedler vector of the symmetrized Laplacian $\hat{L}_u$ is introduced. Finally, bounds on the value of an undirected ratio cut in $\hat{L}_u$ in terms of cuts in $L$ are provided. 

\subsection{Undirected Graph Bisection}\label{sec:ugp}
In \cite{Fiedler1975}, Fiedler demonstrated that a valid bisection of an undirected graph can be obtained by applying the eigenvector corresponding to the first nonzero eigenvalue of an undirected Laplacian as an indicator vector. That is, partitioning such that all nodes corresponding to a negative eigenvector component nodes are in one set and all nodes with a positive component are in the complementary set. Criteria for connectedness of the resulting node sets were provided in \cite{Fiedler1975} and later generalized in \cite{urschel2014}. Because of this work, the eigenvector corresponding to the first nonzero eigenvalue of an undirected Laplacian is more commonly known as the Fiedler vector.  

From a graph cut perspective, the Fiedler vector can also be shown to be the solution to the relaxed version of the undirected ratio cut problem \cite{Hage1992} for which the number of clusters, $k$, is $k=2$.   Letting the value of an undirected cut be equal to the sum of the weights of edges crossed by the cut, the undirected ratio cut problem seeks to find the minimal cut(s) for which the number of nodes in each resulting subset is approximately equal. This problem is relevant to applications such as image segmentation \cite{Wang2003}. Mathematically, the value of an undirected ratio cut for $k=2$ is defined as
\begin{align}
\text{URC}(\mathcal{P}_u,\overline{\mathcal{P}}_u) = \frac{\text{cut}(\mathcal{P}_u,\overline{\mathcal{P}}_u)}{|\mathcal{P}_u | } + \frac{\text{cut}(\mathcal{P}_u,\overline{\mathcal{P}}_u)}{|\mathcal {\overline{P}}_u | }. \label{urc1}
\end{align}
By defining the vector $f$ relative to a subset $\mathcal{P}_u \in \mathcal{V}$ as
\begin{align}
f_i = \begin{cases}
  \sqrt{\frac{|\overline{\mathcal{P}}_u|}{ |\mathcal{P}_u|}} & \text{if } i \in \mathcal{P}_u, \\
  -\sqrt{\frac{|\mathcal{P}_u|}{ |\overline{\mathcal{P}}_u|}} & \text{if } i \in \overline{\mathcal{P}}_u,
\end{cases} \label{f}
\end{align}	
it can be shown that (see \cite{luxburg2006} for discussion)
\begin{align}
\text{URC}(\mathcal{P}_u,\overline{\mathcal{P}}_u) = \frac{f^T L_u f}{2n} .
\end{align}
This allows for the following discrete optimization problem for minimizing undirected ratio cut (\ref{urc1}):
\begin{align}
\min_{\mathcal{P}_u \subset \mathcal{V}} f^T L_u f \;\;\text{subject to}\; f \perp \mathbf{1}_n, \; f_i \text{ as defined in (\ref{f})},\; \| f \| = \sqrt{n}. \label{opt1}
\end{align}
The optimization problem (\ref{opt1}) is NP-hard \cite{wagner1993} and can be relaxed by allowing $f \in \mathbb{R}^n$, rather than discrete values. This results in 
\begin{align}
\min_{f \in \mathbb{R}^n} f^T L_u f \;\;\text{subject to}\; f \perp \mathbf{1}_n, \;  \| f \| = \sqrt{n}, \label{opt2}
\end{align}
which, by the Rayleigh--Ritz theorem, can be solved by letting $f$ be the Fiedler vector of $L_u$. To obtain a valid partition that serves as an approximate solution to (\ref{opt1}), the sign of $f$ is used as an indicator function where all nodes with a corresponding positive entry in $f$ are assigned to $\mathcal{P}_u$ and all nodes with a negative entry are assigned to $\mathcal{\overline{P}}_u$. 

\subsection{Directed Graph Bisection} \label{sec:dgp}
In this subsection, bisection of a directed graph is considered, and motivation is provided for applying the Fiedler vector of the associated symmetrized Laplacian $\hat{L}_u$  as an indicator function to partition the corresponding directed graph $L$. As demonstrated in Section \ref{sec:erev}, the eigenvectors of $\hat{L}_u$ are closely related to the effective resistance of $L$. Specifically, the absolute value of the difference between entries $i$ and $j$ in the Fiedler vector provides an approximation of the effective resistance between nodes $i$ and $j$. Consequently, applying the Fiedler vector of $\hat{L}_u$  as an indicator function to split $L$ yields a division where nodes have, on average,  smaller effective resistances to other nodes within the same partition than to nodes on the other side of the split. Therefore, nodes in either partition are close to one another with respect to a graph metric. 

This is significant because the graph metric, square root of effective resistance, implicitly accounts for directionality in the graph.  It follows that the resulting graph partition is also reflective of edge directionality. To illustrate, consider the four-node graph in Figure \ref{ex1}, where, from left to right, the images show the original directed graph $L$, the undirected graph with equivalent effective resistance $\hat{L}_u$, spectral partitioning by the Fiedler vector of $\hat{L}_u$, and the partitioning applied to $L$. 
\begin{figure}[t!]
\centering
\includegraphics[width=5in]{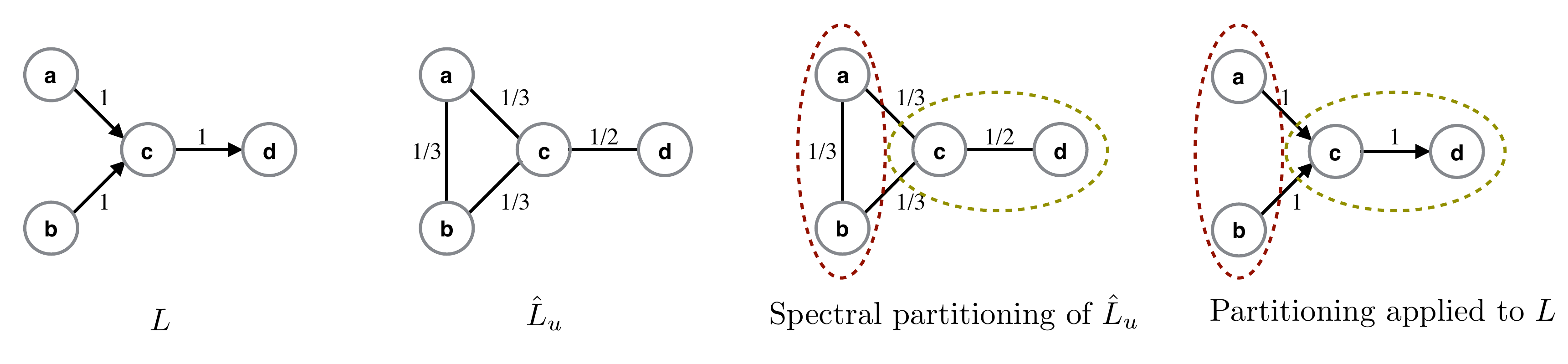}
\caption{Fiedler vector partitioning of simple directed graph with effective resistance preserving symmetrization.}
\label{ex1}
\end{figure}

In contrast, consider the undirected graph on the same set of nodes where edge directionality has been discarded, as is commonly done to bypass the difficulty of working with directed graphs. The result is a star graph where node $c$ is at the center and nodes $a$, $b$, and $d$ are identical in that their labels can be swapped without changing the graph structure. Due to this symmetry, the second smallest eigenvalue of the undirected graph Laplacian is repeated and the Fiedler vectors corresponding to these eigenvalues have a zero entry in the row corresponding to node $c$, which is treated as belonging to the positive side of the partition following the precident in \cite{Fiedler1975, urschel2014}. The Fiedler vector partitions are no longer unique and correspond to isolating a leaf node ($a$ $b$, or $d$), as shown in Figure $\ref{ex2}$.
\begin{figure}[t!]
\centering
\includegraphics[width=5in]{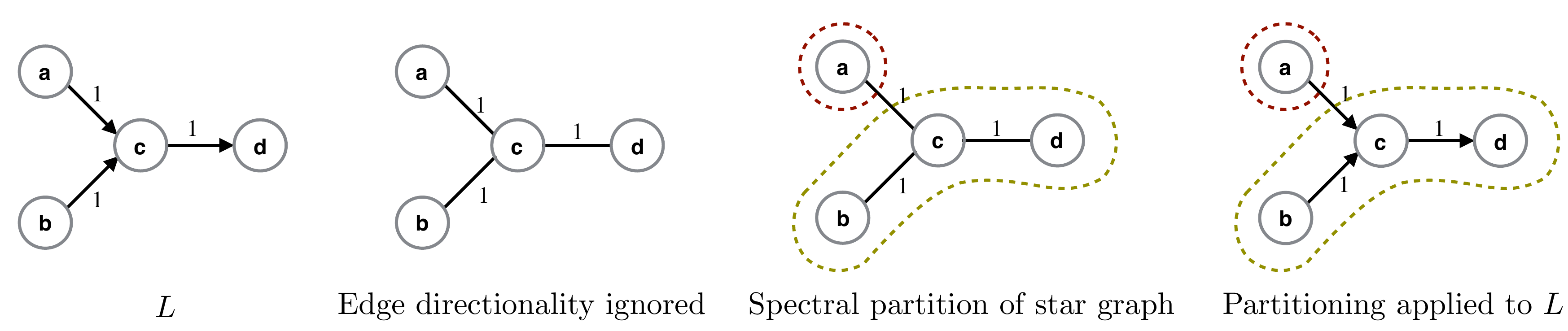}
\caption{Fiedler vector partitioning of simple directed graph where edge directionality has been ignored.}
\label{ex2}
\end{figure}

The partitioning of $L$ shown in Figure \ref{ex1}   is  more reflective of the underlying graph structure and yields partitions with more structural homogeneity \cite{leger2014} than the partitioning in Figure \ref{ex2}. Nodes $a$ and $b$ are similar in that they share a common destination, $c$; therefore it is expected that nodes $a$ and $b$ belong to the same partition. In Figure \ref{ex2}, the partition $\Mp = \{a\}$, $\Bmp = \{b,c,d\}$ is equally likely as $\Mp = \{d\}$, $\Bmp = \{a,b,c\}$. However, those partitions are structurally different in the context of the directed graph and would likely yield very different outcomes in graph analysis applications. 

As an additional example, consider the directed graph in Figure \ref{d_roach}, where the partitions obtained by the Fiedler vector of $\hat{L}_u$ are indicated by the colored ovals. It is easy to check that in this example, the Fiedler vector partitioning cuts the minimum number of edges needed to split the graph into two equal sized groups. 
\begin{figure}[t!]
\centering
\includegraphics[width=3in]{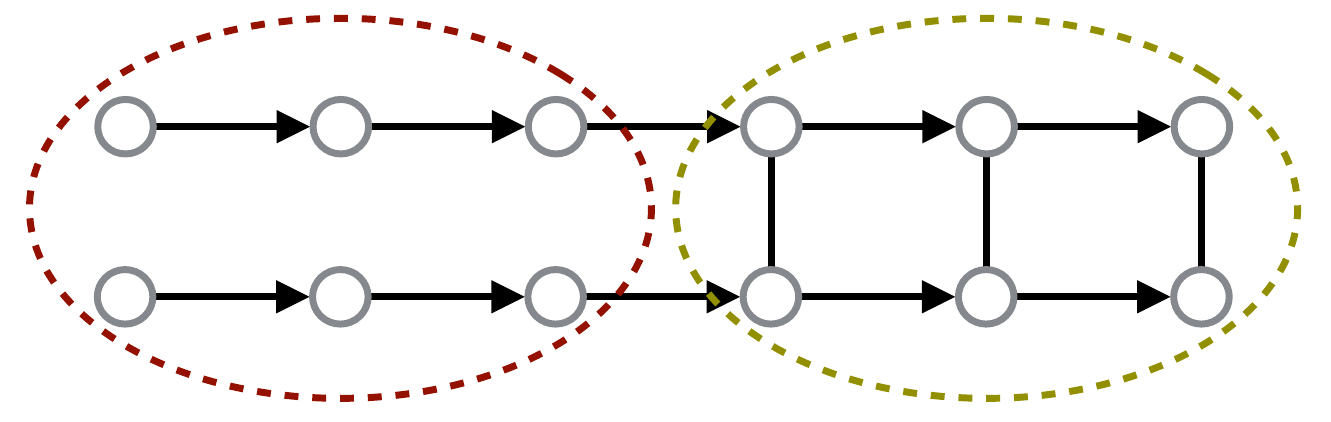}
\caption{Directed roach graph with partitions determined by the sign of the Fielder vector of $\hat{L}_u$.}
\label{d_roach}
\end{figure}

 Ignoring edge directionality yields the undirected roach graph \cite{guattery1998}, shown in Figure \ref{u_roach} with three vertical edges. For this family of graphs, it has been shown that Fiedler vector partitioning corresponds to cutting across the vertical edges connecting the upper and lower long paths, resulting in poor performance with respect to the value of the undirected ration cut as the graph size and number of vertical edges increases \cite{guattery1998}. 
\begin{figure}[t!]
\centering
\includegraphics[width=3in]{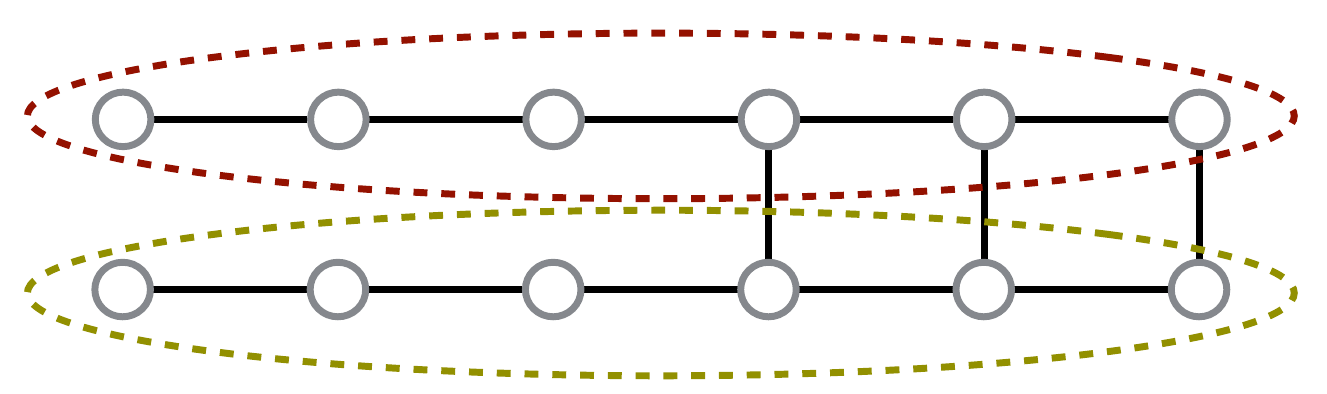}
\caption{Undirected roach graph with partitions determined by the sign of the Fielder vector of ${L}_u$.}
\label{u_roach}
\end{figure}

The comparison of Fiedler vector partitionings of the directed and undirected roach graphs is particularly interesting because the introduction of directionality improves the performance of spectral partitioning with respect to the number of cut edges. In general, this suggests that the performance of Fiedler vector partitioning of an undirected graph is not necessarily indicative of performance of Fiedler vector partitioning of a graph on the same set of nodes with some directed edges. Furthermore, the example reinforces that given a directed Laplacian, the symmetrized Laplacian $\hat{L}_u$ is structurally different from the Laplacian obtained by ignoring edge directionality. 

\subsection{Relating Undirected and Directed Ratio Cuts}
  As discussed in Section \ref{sec:ugp}, Fiedler vector partitioning of a symmetrized Laplacian $\hat{L}_u$ corresponds to approximating the minimum undirected ratio cut of $\hat{L}_u$. While Section \ref{sec:dgp}  demonstrated  that since $\hat{L}_u$ preserves a metric on $L$, the Fiedler vector partitioning of $\hat{L}_u$ yields a structurally meaningful partition when applied to $L$, it remains to be shown that there is a relationship between ratio cuts in $\hat{L}_u$ and in $L$.  To better understand what approximating the minimum of $f^T \hat{L}_uf$ means in terms of cuts in $L$, the directed ratio cut is first defined as
  \begin{align}
  \text{DRC}(\Mp,\Bmp) =  \frac{f^T Lf}{n} = \frac{\text{cut}(\mathcal{P},\overline{\mathcal{P}})}{|\mathcal{P} | } + \frac{\text{cut}(\Bmp,{\mathcal{P}})}{|\mathcal {\overline{P}} | }, \nonumber
  \end{align}
 where, in contrast to the undirected case, $\text{cut}(\mathcal{P},\overline{\mathcal{P}})$ is no longer by definition equal to $\text{cut}(\Bmp,{\mathcal{P}})$. 
 
 The term $f^T \hat{L}_uf$ can be rewritten in terms of $L$ as 
 \begin{align}
 f^T \hat{L}_uf = f^T (I_n + 2K)^{-1}P Lf = \frac{1}{2} f^T (I_n +S) PL f, \nonumber
 \end{align}
 where $S$ is an orthogonal matrix, that is, $S^T = S^{-1}$, defined by the Cayley transform $S = (I_n - 2K)(I_n + 2K)^{-1}$. Since $f \perp \mathbf{1}_n$ and vector multiplication with an orthogonal matrix preserves length, the following holds:
  \begin{align}
\frac{1}{2} f^T (I_n +S) PL f  = \frac{1}{2} f^T (I_n +S) L f = f^T L f + \frac{1}{2} f^T (S-I_n)L f. \label{qf1}
  \end{align}
The final equality of Equation (\ref{qf1}) is included to demonstrate that if $S = I_n$ (alternatively, $K = 0$), then $f^T \hat{L}_uf = f^TLf$ and $\text{URC}(\Mp,\Bmp) = \frac{1}{2} \text{DRC}(\Mp,\Bmp)$. Thus,
   \begin{align}
 f^T \hat{L}_uf  &=\frac{1}{2} f^T L f + \frac{1}{2} f^T SL f = \frac{1}{2} f^T L f + \frac{1}{2} \tilde{f}^T L f, \nonumber 
\end{align}
where
\begin{align}
\tilde{f}_k &=   \sqrt{\frac{|\overline{\mathcal{P}}|}{ |\mathcal{P}|}} \sum_{i \in \mathcal{P}} S_{i,k} - \sqrt{\frac{|\mathcal{P}|}{ |\overline{\mathcal{P}}|}} \sum_{j \in \mathcal{\overline{P}}} S_{j,k}. \nonumber 
\end{align}
The term $\sum_{i \in \mathcal{P}} S_{i,k}$ can be easily bounded as $- \sqrt{|{\mathcal{P}}|} \leq \sum_{i \in \mathcal{P}} S_{i,k} \leq \sqrt{| \mathcal{P}|}$ by applying the Cauchy--Schwarz inequality and the orthogonality of $S$. Similarly,  $- \sqrt{|\overline{\mathcal{P}}|} \leq \sum_{j \in \overline{\mathcal{P}}} S_{j,k} \leq \sqrt{|\overline{ \mathcal{P}}|}$. Therefore, 
\begin{align}
&  - \sqrt{|\overline{\mathcal{P}}|} -  \sqrt{|\mathcal{P}|}\leq \tilde{f}_k \leq \sqrt{|\overline{\mathcal{P}}|} + \sqrt{|\mathcal{P}|}.  \label{fkbound}
 \end{align}
   Furthermore, the elements of the vector $Lf$ can be expressed as
 \begin{align}
 Lf_i = \begin{cases}
 \frac{n}{\sqrt{|\mathcal{P}|| \overline{\mathcal{P}}|}} \sum_{k \in \overline{\mathcal{P}}} a_{i,k}  & \text{if } v_i \in \mathcal{P}, \\
  -\frac{n}{\sqrt{|\mathcal{P}|| \overline{\mathcal{P}}|}} \sum_{j \in {\mathcal{P}}} a_{i,j}  & \text{if } v_i \in\overline{\mathcal{P}}. 
\end{cases} \label{lfbound}
 \end{align}
 Consequently, upper and lower bounds on $f^TSLf$ can be obtained by combining Equations (\ref{fkbound}) and (\ref{lfbound}),
 \begin{align}
  f^TSLf &\geq -n \Big(\frac{1}{\sqrt{|\Mp|}} +\frac{1}{\sqrt{|\Bmp|}}\Big) \Big(\text{cut}(\mathcal{P},\overline{\mathcal{P}}) +  \text{cut}(\overline{\mathcal{P}},\mathcal{P}) \Big),  \nonumber \\
 f^TSLf &\leq n \Big(\frac{1}{\sqrt{|\Mp|}} +\frac{1}{\sqrt{|\Bmp|}} \Big) \Big(\text{cut}(\mathcal{P},\overline{\mathcal{P}}) +  \text{cut}(\overline{\mathcal{P}},\mathcal{P}) \Big). \nonumber
 \end{align}
 This leads to the following bounds on an undirected ratio cut in $\hat{L}_u$ in terms of cuts in $L$:
 \begin{align}
 \frac{f^T \hat{L}_uf}{2n} &\geq \frac{\text{cut}(\mathcal{P},\overline{\mathcal{P}})}{|\mathcal{P} | } + \frac{\text{cut}(\overline{\mathcal{P}},{\mathcal{P}})}{|\mathcal {\overline{P}} | }-\Big(\frac{1}{\sqrt{|\Mp|}} +\frac{1}{\sqrt{|\Bmp|}}\Big) \Big(\text{cut}(\mathcal{P},\overline{\mathcal{P}})+\text{cut}(\overline{\mathcal{P}},\mathcal{P})\Big), \nonumber \\
 \frac{f^T \hat{L}_uf}{2n}  &\leq \frac{\text{cut}(\mathcal{P},\overline{\mathcal{P}})}{|\mathcal{P} | } + \frac{\text{cut}(\overline{\mathcal{P}},{\mathcal{P}})}{|\mathcal {\overline{P}} | } + \Big(\frac{1}{\sqrt{|\Mp|}} +\frac{1}{\sqrt{|\Bmp|}}\Big) \Big(\text{cut}(\mathcal{P},\overline{\mathcal{P}})+\text{cut}(\overline{\mathcal{P}},\mathcal{P})\Big). \label{bounds}
 \end{align}

Though these bounds are loose, they establish that undirected ratio cuts in $\hat{L}_u$ are indeed bounded with respect to the value of directed ratio cuts in $L$. More specifically, Equation (\ref{bounds}) implies that letting the total cut, $\text{cut}_t(\mathcal{P},\overline{\mathcal{P}}) = \text{cut}(\mathcal{P},\overline{\mathcal{P}}) +  \text{cut}(\overline{\mathcal{P}},{\mathcal{P}} )$, and without loss of generality assuming $|\Mp| \leq |\Bmp|$,  
\begin{align}
 0 \leq \text{URC}(\Mp,\Bmp) \leq (1+2\sqrt{|\Mp|}) \frac{\text{cut}_t(\mathcal{P},\overline{\mathcal{P}})}{|\Mp|} .
\end{align}
Therefore, given a directed graph $L$ and its symmetrization $\hat{L}_u$,  any approximation or solution to the optimization problem (\ref{opt1}) for  $\hat{L}_u$ will give a partitioning in the associated directed graph $L$ that is at most $(1 + 2\sqrt{|\Mp|}) $ times the ratio between the total sum of directed cut edges and the cardinality of the smaller partition. The minimum of this ratio is often referred to as the edge expansion or sparsest cut of the graph \cite{arora2009}.


\section{Application: Node sparsification} \label{sec:spars}
Kron reduction is a node (equivalently, vertex) sparsification method originating from circuit theory where the Schur complement of a subset of circuit elements is used to define an electrically equivalent circuit~\cite{kron1965}. In \cite{dorfler2012}, the authors provide a graph-theoretic analysis of the Kron reduction process and demonstrate, among other results, that  effective resistance is preserved among nodes in a Kron-reduced graph. This property motivates an extension of Kron reduction to directed graphs, which will be discussed in the following subsection. 

The objective of extending Kron reduction to directed graphs is to construct a directed graph on a limited subset of nodes where the effective resistance between nodes in the reduced graph is equivalent to the effective resistance between those nodes in the original graph. Such a reduction is relevant to many large directed graph analysis problems where only a limited number of nodes are of interest. For example, consider a social network in which there is a small subset of users with a large number of followers. To better understand the influence of these ``important'' users on each other, one could construct a reduced graph with only those users. By preserving a metric, square root of effective resistance, on graphs it is ensured that if two users are at a large distance from one another in the original graph, the edge weighting in the reduced graph will be such that this distance remains the same. As a result, one can simulate interactions between these important users or how information diffuses between them without considering the entire, very large, social network. 

Before discussing the directed graph node sparsification,  Kron reduction in undirected graphs is first reviewed. Consider a subset of nodes $\mathcal{V}^\tkr \subset \mathcal{V}$ and the complementary subset $\mathcal{\overline{V}}^\tkr \subset \mathcal{V}$ in an undirected graph with Laplacian matrix $L_u$ partitioned as
\begin{align}
L_u = \left[ \begin{array}{c|c}
  L_{u_{\mathcal{V}^\tkr,\mathcal{V}^\tkr}} &  L_{u_{\mathcal{V}^\tkr,\mathcal{\overline{V}}^\tkr}}\\
\hline
L_{u_{\mathcal{\overline{V}}^\tkr,\mathcal{V}^\tkr}} & L_{u_{\mathcal{\overline{V}}^\tkr,\mathcal{\overline{V}}^\tkr}}
\end{array} \right].
\end{align}
Then the Kron-reduced Laplacian, $L_u^\tkr$, on the subset $\mathcal{V}^\tkr$ is defined by the Schur complement 
\begin{align}
L_u^\tkr \coloneqq L_{u_{\mathcal{V}^\tkr,\mathcal{V}^\tkr}} - L_{u_{\mathcal{V}^\tkr,\mathcal{\overline{V}}^\tkr}}L^{-1}_{u_{\mathcal{\overline{V}}^\tkr,\mathcal{\overline{V}}^\tkr}}L_{u_{\mathcal{\overline{V}}^\tkr,\mathcal{V}^\tkr}}. \label{krun}
\end{align} 
A thorough survey of properties of the  Kron-reduced, undirected Laplacian can be found in  \cite{dorfler2012}. In general, it is assumed that the subset of nodes that is of interest, $\mathcal{V}^\tkr $, has been determine a priori. For electrical applications, $\mathcal{V}^\tkr $ typically contains external nodes, or in other words the critical electric components on the network periphery. The subset could also be determine by taking only those with certain properties, such as high degree in the social network example. Alternatively, if one wants to cut the size of the network approximately in half, the eigenvector associated with the largest eigenvalue of $L_u$ could be applied as an indicator vector \cite{Shuman2016}. This gives a generalization of selecting every other node in the network. 

One straightforward approach for node sparsification of directed graphs is to replace $L_u$ with a directed graph Laplacian, $L$, in Equation (\ref{krun}). This can lead to computational difficulties when $L^{-1}_{\mathcal{\overline{V}}^\tkr,\mathcal{\overline{V}}^\tkr}$ is singular, which occurs when the set $\mathcal{\overline{V}}^\tkr$ contains a node with no outgoing edges. Additionally, the resulting reduced Laplacian could contain new self loops  and effective resistance between remaining nodes is typically no longer equivalent to their effective resistance in the original graph. As a result, there is in general no clear interpretation of the reduced graph Laplacian.  

Due to the fact that the effective resistance in $\hat{L}_u$ is equivalent to effective resistance in $L$, the aforementioned issues can be circumvented by calculating the Schur complement (\ref{krun}) using $\hat{L}_u$, such that an undirected graph on a subset of nodes, $\hat{L}_u^\tkr$, is obtained. Then, after computing $H^\tkr = H_{\mathcal{V}^\tkr,\mathcal{V}^\tkr}P_m$ and $K^\tkr = K_{\mathcal{V}^\tkr,\mathcal{V}^\tkr}P_m$, the reduced graph can be mapped from an undirected  to a directed graph by 
\begin{align}
L^\tkr = H^\tkr (I_m +2 K^\tkr) \hat{L}_u^\tkr. \nonumber
\end{align} A schematic drawing of this procedure is shown in Figure \ref{kr1}.
 
\begin{figure}[h!]
\centering
\includegraphics[width=2in]{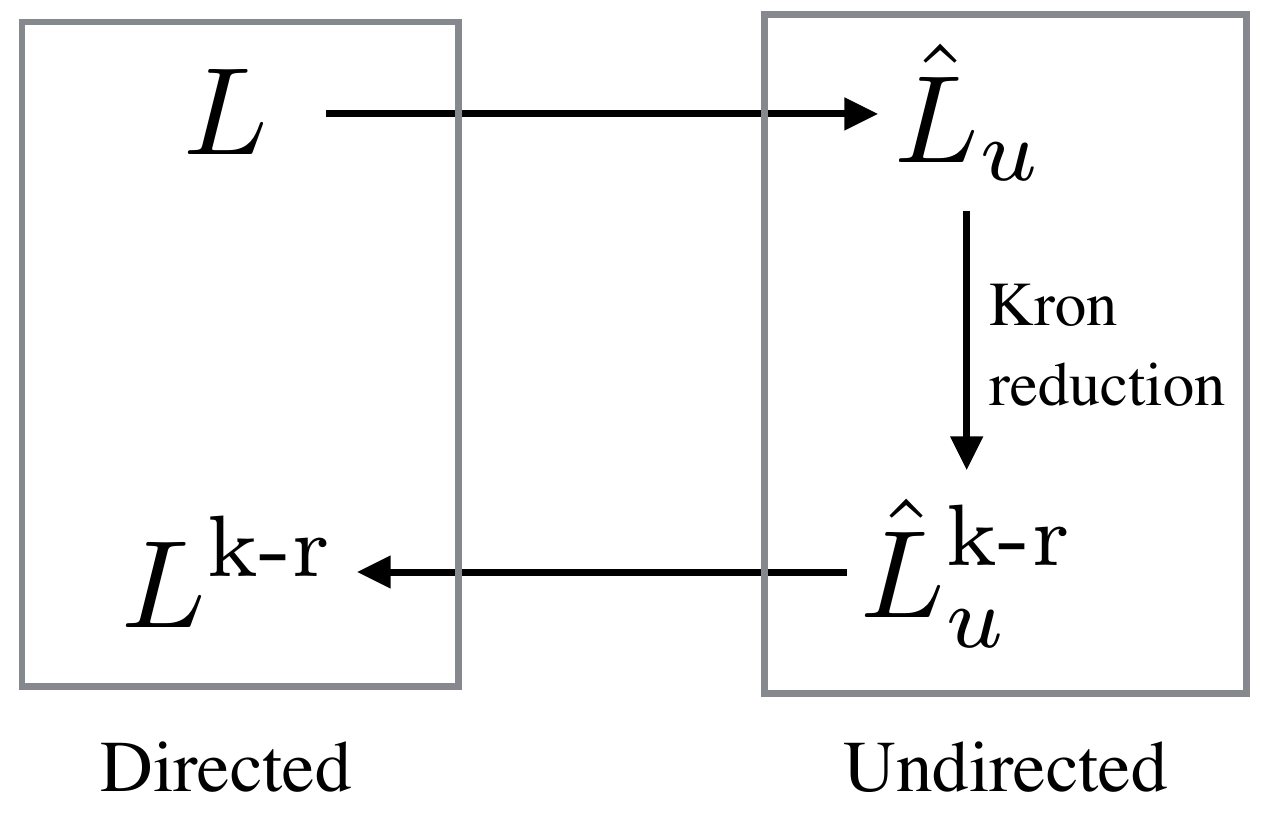}
\caption{A schematic drawing of the procedure for calculated a directed, Kron-reduced Laplacian}
\label{kr1}
\end{figure} 

It is noted that the resulting directed, reduced Laplacian is not unique since  $H^\tkr$ and $K^\tkr$ could be alternatively defined as any projection matrix and skew symmetric matrix, respectively, that meet the descriptions from Section \ref{sec:explore}. The projection of the submatrices of $H$ and $K$ associated with $\mathcal{V}^\tkr$ onto $\mathbf{1}_n^\perp$ provides an intuitive choice for the definitions of $H^\tkr$ and $K^\tkr$. Consider first the matrix $H$, which from a dynamical systems perspective represents damping or absorption of a dynamic process. If $1/n$ units of information (1 unit total over the graph) is given to each node and passed around the network through the directed edges, then the diagonal entries of the matrix $I_n - H$ represent how much information settles, or is absorbed, at each node when the dynamic process becomes stationary. By projecting the submatrix of $H$ associated with $\mathcal{V}^\tkr$ onto $\mathbf{1}_n^\perp$, it is ensured that the damping or absorption properties of a node are preserved in the reduced graph. A similar argument can be made for the skew symmetric matrix $K$, which accounts for rotation and subsequently directionality in the graph. The simple projection of the submatrix of $K$ associated with $\mathcal{V}^\tkr$ onto $\mathbf{1}_n^\perp$ preserves a notion of directionality of the connections between two nodes in the reduced graph. 

\section{Analysis of large graphs} \label{sec:large}
It has been demonstrated that in large undirected graphs ($n > 2000$), effective resistance between two nodes, $i$ and $j$, becomes proportional to the node degrees and structural information in $r_{i,j}$ is overshadowed by these terms. That is, $r_{i,j} \appropto 1/d_i + 1/d_j$, with a small remainder reflecting graph structure \cite{Luxburg2010}. Consequently, while all statements in this paper hold for any graph size, they are most meaningful with respect to the underlying structure when the graph is not too large. Nevertheless, if one has a distance-based problem such as clustering and chooses to symmetrize a large, directed graph using the approach from Section \ref{sec:sym}, it is important to analyze the resulting undirected graph with a meaningful distance measure. Two examples of distance measures applicable for the analysis of a large symmetrized Laplacian $\hat{L}_u$ are those induced by the amplified commute kernel \cite{Luxburg2010} and the heat kernel \cite{bai2004}. Recall that a distance measure $d_{\kappa_{i,j}}$, between two nodes $i$, $j$, induced by a kernel $\kappa$ is  $d_{\kappa_{i,j}} = \kappa_{i,i} + \kappa_{j,j} - 2 \kappa_{i,j}$.

A more rigorous and complete approach to studying large directed graphs would be to develop symmetrization methods that preserve distances that are strucurally representative even in large graphs. These include the distance measures induced by the aforementioned kernels as well as the class of graph-geodetic distances proposed in \cite{Chebotarev2011} and the communicability distance \cite{estrada2012}. However, these distances are not yet well-defined on directed graphs. Therefore, extending these distance measures to directed graphs and establishing existence of distance preserving symmetrized graphs are areas of important future research. 

\section{Final remarks} \label{sec:fr}
In this paper a new approach for directed graph symmetrization that preserves a graph metric is presented. It is shown that any connected, directed Laplacian matrix, $L$,  can be decomposed into the product of a projection matrix, a skew symmetric matrix, and the corresponding symmetrized Laplacian. The decomposition can be interpreted as the product of a damping (or absorption) matrix, a rotational matrix, and a symmetric positive semidefinite stability matrix.  The symmetrized Laplacian, $\hat{L}_u$ preserves a graph metric, the square root of effective resistance, of the original directed Laplacian. Many spectral properties of  $\hat{L}_u$ are shown to be reflective of properties of the $L$. For example, the rate of convergence to consensus for a directed graph with Laplacian $L$ in a simple dynamic process is dominated by the second smallest eigenvalue of $\hat{L}_u$. Additionally, the trace of $L$ and its symmetrization $\hat{L}_u$ are equivalent and the eigenvalues of the symmetrized Laplacian bound those of the directed Laplacian. These relationships stem from the fact that the eigenvalue-scaled eigenvectors of $\hat{L}_u$ form a Euclidean space that preserves effective resistance in $L$. 

The clear relationship between the spectrum of $\hat{L}_u$ and properties of $L$ motivate the application of $\hat{L}_u$ as a proxy for $L$ in graph analysis methods which require symmetry. One example is graph bisection, where it is shown that taking the sign of the Fiedler vector of $\hat{L}_u$ as an indicator vector yields a partition that is reflective of the structure of the directed graph $L$. Furthermore, the value of the cut in $\hat{L}_u$ generated by the partition is bounded relative to the cut generated by applying the same partition to $L$. A second example for the application of $\hat{L}_u$ is node sparsification, and it has been demonstrated that the Schur complement of $\hat{L}_u$ corresponding to a reduced subset of nodes can be used to define a graph on the reduced subset for which effective resistance is preserved.

The symmetrization method and directed Laplacian decomposition present a number of courses for future study. Given an undirected $\hat{L}_u$, one can generate a directed Laplacian, $L$, by choosing $H$ and $K$. However, the resulting $L$ is not necessarily insightful and could be dense. Therefore an interesting problem is to determine the matrix $K$ that results in a directed Laplacian that is as sparse as possible,  assuming the desired set of globally reachable nodes (and therefore $H$), and $\hat{L}_u$ are known. Additionally, there are many applications for which effective resistance preserving symmetrization could be applicable, such as edge sparsification and signal processing on directed graphs.

\FloatBarrier
\bibliographystyle{siamplain} 
\bibliography{graphsymbib} 

\end{document}